\newtheorem{mytheorem}{Theorem}
\newtheorem{myexample}[mytheorem]{Example}
\newtheorem{mydefinition}[mytheorem]{Definition}
\newtheorem{myprinciple}[mytheorem]{Principle}
\begin{document}

\title{AGI and the Knight-Darwin Law: why idealized AGI reproduction requires
collaboration}
\titlerunning{AGI and the Knight-Darwin Law}

\author{Samuel Allen Alexander\inst{1}\orcidID{0000-0002-7930-110X}}
\authorrunning{S.\ A.\ Alexander}
\institute{The U.S.\ Securities and Exchange Commission
\email{samuelallenalexander@gmail.com}
\url{https://philpeople.org/profiles/samuel-alexander/publications}}

\maketitle

\begin{abstract}
Can an AGI create a more intelligent AGI?
Under idealized assumptions, for a certain theoretical
type of intelligence, our answer is:
``Not without outside help''.
This is a paper on the mathematical structure of AGI populations
when parent AGIs create child AGIs. We argue that such populations
satisfy a certain biological law.
Motivated by observations of sexual reproduction in seemingly-asexual
species,
the Knight-Darwin Law states that it is impossible for one organism
to asexually produce another, which asexually produces another,
and so on forever:
that any sequence of organisms (each one a child of the previous) must contain
occasional multi-parent organisms, or must terminate.
By proving that a certain measure (arguably an intelligence measure)
decreases when an
idealized parent AGI single-handedly creates a child AGI,
we argue that a similar Law holds for AGIs.

\keywords{Intelligence Measurement \and Knight-Darwin Law \and Ordinal Notations
\and Intelligence Explosion}
\end{abstract}

\section{Introduction}

It is difficult to reason about agents with Artificial General Intelligence (AGIs)
programming AGIs\footnote{Our approach to AGI is what
Goertzel \cite{goertzel2014artificial} describes
as the Universalist Approach:
we consider ``...an idealized case of AGI, similar to
assumptions like the frictionless plane in physics'', with the hope that by
understanding this ``simplified special
case, we can use the understanding we've gained to address more realistic
cases.''}. To get our hands on something solid,
we have attempted to find structures that abstractly capture the core essence of
AGIs programming AGIs. This led us to discover
what we call the \emph{Intuitive Ordinal Notation System} (presented in Section
\ref{notationsystemsection}), an ordinal notation system that gets directly at
the heart of AGIs creating AGIs.

We call an AGI \emph{truthful} if the things it knows
are true\footnote{Knowledge and truth are formally
treated in \cite{alexander2019measuring} but here we aim at a more general audience.
For the purposes of this paper, an AGI can be thought of as knowing a fact if and only
if the AGI would list that fact if commanded to spend eternity listing all the facts
that it knows. We assume such knowledge is closed under deduction, an assumption which
is ubiquitous in modal logic, where it often appears in a form like
$K(\phi\rightarrow\psi)\rightarrow (K(\phi)\rightarrow K(\psi))$.
Of course,
it is only in the idealized context of this paper that one should assume AGIs
satisfy such closure.}.
In \cite{alexander2019measuring}, we argued
that if a truthful AGI $X$ creates (without external help) a truthful AGI $Y$ in such a way
that $X$ knows the truthfulness
of $Y$, then $X$ must be more intelligent than $Y$
in a certain formal sense. The argument is based on the key assumption that if $X$
creates $Y$, without external help, then $X$ necessarily knows $Y$'s source code.

Iterating the above argument, suppose $X_1,X_2,\ldots$
are truthful AGIs such that each $X_i$ creates, and knows the truthfulness and
the code of, $X_{i+1}$. Assuming the previous paragraph, $X_1$ would be more
intelligent than $X_2$, which would be more
intelligent than $X_3$, and so on (in our certain formal sense).
In Section \ref{informalargumentsection} we will argue that this implies
it is impossible for such
a list $X_1,X_2,\ldots$ to go on forever: it would have to stop after finitely
many elements\footnote{This may initially seem to contradict some mathematical
constructions \cite{kripke2019ungroundedness}
\cite{visser2002semantics} of infinite descending chains of theories. But those
constructions only work for weaker languages, making them inapplicable to
AGIs which comprehend
linguistically strong second-order predicates.}.

At first glance, the above results might
seem to suggest skepticism regarding the singularity---regarding
what Hutter \cite{hutter2012} calls \emph{intelligence explosion}, the idea of
AGIs creating better AGIs, which create even better AGIs, and so on.
But there is a loophole (discussed further in
Section \ref{knightdarwinagisection}). Suppose AGIs $X$ and $X'$
collaborate to create $Y$. Suppose $X$
does part of the programming work,
but keeps the code secret from $X'$, and suppose $X'$ does another part of
the programming work, but keeps the code secret from $X$.
Then neither
$X$ nor $X'$ knows $Y$'s full source code, and yet if $X$ and $X'$ trust
each other, then both $X$ and $X'$ should be able to trust $Y$, so the above-mentioned
argument breaks down.

Darwin and his contemporaries observed that even
seemingly asexual plant species occasionally reproduce sexually.
For example, a plant in which pollen is ordinarily isolated, might
release pollen into the air if a storm damages the part of the
plant that would otherwise shield the pollen\footnote{Even prokaryotes can
be considered to occasionally have multiple parents, if lateral gene
transfer is taken into account.}.
The Knight-Darwin Law \cite{darwin1898knight}, named after Charles Darwin
and Andrew Knight, is the
principle (rephrased in modern language) that there cannot be an infinite
sequence $X_1,X_2,\ldots$ of biological organisms such that each $X_i$ asexually
parents $X_{i+1}$. In other words, if $X_1,X_2,\ldots$ is any infinite list of
organisms such that each $X_i$ is a biological parent of $X_{i+1}$, then some of the
$X_i$ would need to be multi-parent organisms.
The reader will immediately notice a striking parallel between
this principle and the discussion in the previous two paragraphs.

In Section \ref{notationsystemsection} we present the Intuitive Ordinal Notation
System.

In Section \ref{informalargumentsection} we argue\footnote{This argument appeared
in a fully rigorous form in \cite{alexander2019measuring},
but in this paper we attempt to make it more approachable.} that if truthful AGI $X$ creates
truthful AGI $Y$, such that $X$
knows the code and truthfulness of $Y$, then, in a certain formal sense, $Y$ is
less intelligent
than $X$.

In Section \ref{knightdarwinagisection} we adapt the Knight-Darwin Law from biology to AGI
and speculate about what it might mean for AGI.

In Section \ref{objectionsection} we address some anticipated objections.


Sections \ref{notationsystemsection}--\ref{informalargumentsection} are not new
(except for new motivation and discussion). Their content appeared in
\cite{alexander2019measuring}, and was more rigorously formalized there.
Sections \ref{knightdarwinagisection}--\ref{objectionsection} contain this paper's
new material. Of this, some was hinted at in \cite{alexander2019measuring},
and some appeared (weaker and less approachably) in the author's
dissertation \cite{alexanderdissert}.

\section{The Intuitive Ordinal Notation System}
\label{notationsystemsection}

If humans can write AGIs, and AGIs are at least as smart as humans,
then AGIs should be capable of writing AGIs.
Based on the conviction that an AGI should be capable of writing AGIs,
we would like to come up with a more concrete structure, easier to reason
about, which we can use to better understand AGIs.

To capture the essence of an AGI's AGI-programming
capability, one might try: ``computer
program that prints computer programs.'' But this only
captures the
AGI's capability to write \emph{computer programs}, not to write \emph{AGIs}.

How
about: ``computer program that prints computer programs that print
computer programs''? This second attempt
seems to capture an AGI's ability to write \emph{program-writing programs},
not to write \emph{AGIs}.

Likewise, ``computer program that prints computer programs that print computer
programs that print computer programs'' captures the ability to write
\emph{program-writing-program-writing programs}, not \emph{AGIs}.

We need to short-circuit the above process. We need to come up with a notion
X which is equivalent to ``computer program that prints members of X''.

\begin{mydefinition}
\label{literalnotationdef}
    (See the following examples)
    We define the Intuitive Ordinal Notations to be the smallest set $\mathcal P$
    of computer programs such that:
    \begin{itemize}
        \item
            Each computer program $p$ is in $\mathcal P$ iff all of
            $p$'s outputs are also in $\mathcal P$.
    \end{itemize}
\end{mydefinition}

\begin{myexample}
\label{simpleexamples}
(Some simple examples)
    \begin{enumerate}
    \item
    Let $P_0$ be ``End'', a program which immediately stops without any outputs.
    Vacuously, all of $P_0$'s outputs are in $\mathcal P$
    (there are no such outputs). So $P_0$ is an Intuitive Ordinal Notation.
    \item
    Let $P_1$ be ``Print(`End')'', a program which outputs ``End'' and then
    stops. By (1), all of $P_1$'s outputs are Intuitive Ordinal Notations,
    therefore, so is $P_1$.
    \item
    Let $P_2$ be ``Print(`Print(`End')')'', which outputs ``Print(`End')'' and then
    stops. By (2), all of $P_2$'s outputs are Intuitive Ordinal Notations,
    therefore, so is $P_2$.
    \end{enumerate}
\end{myexample}

\begin{myexample}
\label{omegaexample}
(A more interesting example)
    Let $P_\omega$ be the program:
    \[
        \mbox{\normalfont Let X = `End';
        While(True) \{ Print(X); X = ``Print(`'' + X + ``')''; \}}
    \]
    When executed, $P_\omega$ outputs ``End'', ``Print(`End')'',
    ``Print(`Print(`End')')'', and so on forever. As
    in Example \ref{simpleexamples}, all of these are Intuitive Ordinal Notations.
    Therefore, $P_\omega$ is an Intuitive Ordinal Notation.
\end{myexample}

To make Definition \ref{literalnotationdef} fully rigorous, one would need
to work in a formal model of computation; see \cite{alexander2019measuring} (Section 3)
where we do exactly that.
Examples \ref{simpleexamples} and \ref{omegaexample} are reminiscent
of Franz's approach of ``head[ing] for general algorithms at low complexity levels
and fill[ing] the task cup from the bottom up'' \cite{franz2015toward}.
For a much larger collection of examples, see \cite{github}.
A different type of example will be sketched in the proof of Theorem
\ref{maintheorem} below.

\begin{mydefinition}
    For any Intuitive Ordinal Notation $x$, we define an ordinal $|x|$
    inductively as follows: $|x|$ is the smallest ordinal $\alpha$
    such that $\alpha>|y|$ for every output $y$ of $x$.
\end{mydefinition}

\begin{myexample}
    \begin{itemize}
        \item
        Since $P_0$ (from Example \ref{simpleexamples}) has no outputs,
        it follows that $|P_0|=0$, the smallest ordinal.
        \item
        Likewise, $|P_1|=1$ and $|P_2|=2$.
        \item
        Likewise, $P_\omega$ (from Example \ref{omegaexample}) has outputs
        notating $0, 1, 2, \ldots$---all the finite natural numbers. It follows
        that $|P_\omega|=\omega$, the smallest
        infinite ordinal.
        \item
        Let $P_{\omega+1}$ be the program ``Print($P_\omega$)'',
        where $P_\omega$ is as in Example \ref{omegaexample}.
        It follows that $|P_{\omega+1}|=\omega+1$, the next ordinal after
        $\omega$.
    \end{itemize}
\end{myexample}

The Intuitive Ordinal Notation System is a more intuitive simplification of
an ordinal notation system known as Kleene's $\mathcal O$.

\section{Intuitive Ordinal Intelligence}
\label{informalargumentsection}

Whatever an AGI is, an AGI should know certain
mathematical facts.
The following is a universal notion of an AGI's intelligence based
solely on said facts. In \cite{alexander2019measuring}
we argue that this notion captures key components of intelligence such as
pattern recognition, creativity, and the ability or
generalize. We will give further justification in
Section \ref{objectionsection}. Even if the reader refuses to accept this as
a genuine intelligence measure, that is merely a name we have chosen for it:
we could give it any other name without compromising this paper's structural
results.

\begin{mydefinition}
\label{maindefinition}
    The \emph{Intuitive Ordinal Intelligence} of a truthful AGI $X$ is the smallest
    ordinal $|X|$ such that $|X|>|p|$ for every Intuitive Ordinal Notation
    $p$ such that $X$ knows that $p$ is an
    Intuitive Ordinal Notation.
\end{mydefinition}

The following theorem provides a relationship\footnote{Possibly formalizing a
relationship implied
offhandedly by Chaitin, who suggests ordinal computation as a mathematical challenge
intended to encourage evolution, ``and the larger the ordinal,
the fitter the organism'' \cite{chaitin}.} between Intuitive Ordinal Intelligence
and AGI creation of AGI. Here, we give an informal version of the proof; for a version
spelled out in complete formal detail, see \cite{alexander2019measuring}.

\begin{mytheorem}
\label{maintheorem}
    Suppose $X$ is a truthful AGI, and $X$ creates a truthful AGI $Y$
    in such a way that $X$ knows $Y$'s code and truthfulness. Then
    $|X|>|Y|$.
\end{mytheorem}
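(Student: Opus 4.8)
The plan is to produce, from the hypotheses, a single Intuitive Ordinal Notation $p$ with $|p| = |Y|$ which $X$ knows to be an Intuitive Ordinal Notation. Definition \ref{maindefinition} then yields $|X| > |p| = |Y|$ immediately, since $|X|$ was defined to exceed $|p|$ for every Intuitive Ordinal Notation $p$ whose membership in $\mathcal{P}$ is known to $X$. So the whole proof reduces to building such a $p$.

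First I would define the program $p$ as follows: when run, $p$ simulates $Y$ being made to enumerate every fact $Y$ knows (in the operational sense of the footnote in the introduction), and each time that enumeration emits a statement of the form ``$q$ is an Intuitive Ordinal Notation'' (with $q$ a program), $p$ prints $q$. Because $X$ knows $Y$'s code, $X$ can actually write down this $p$. By construction, the set of outputs of $p$ is exactly the set of programs $q$ such that $Y$ knows $q$ to be an Intuitive Ordinal Notation.

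Second I would verify $p \in \mathcal{P}$ and compute $|p|$. Since $Y$ is truthful, every $q$ that $Y$ knows to be an Intuitive Ordinal Notation genuinely lies in $\mathcal{P}$; hence every output of $p$ lies in $\mathcal{P}$, so $p \in \mathcal{P}$ by Definition \ref{literalnotationdef}. Comparing the relevant definitions, the least ordinal strictly above $|q|$ over all outputs $q$ of $p$ coincides with the least ordinal strictly above $|q|$ over all $q$ that $Y$ knows to notate; that is, $|p| = |Y|$.

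The remaining, and I expect hardest, step is to show that $X$ knows $p \in \mathcal{P}$. The intended argument: $X$ knows $Y$'s code, hence $X$ knows the provable fact that every output of $p$ is something $Y$ lists as an Intuitive Ordinal Notation; $X$ also knows $Y$'s truthfulness, i.e.\ that everything $Y$ knows is true; and $X$'s knowledge is closed under deduction, so $X$ may combine these with Definition \ref{literalnotationdef} to conclude that every output of $p$ is in $\mathcal{P}$, and therefore that $p$ itself is in $\mathcal{P}$. This step is delicate because it is where every hypothesis is consumed at once, and because one must be confident that ``$X$ knows $Y$'s code'' really does let $X$ internalize both the construction of $p$ and the reasoning about its outputs --- the point at which a fully rigorous treatment (as in \cite{alexander2019measuring}) has to pin down the formal models of computation and of knowledge, and the self-referential construction of $p$. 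Once that step is secured, combining it with $|p| = |Y|$ and Definition \ref{maindefinition} completes the proof.
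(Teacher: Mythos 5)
Your proposal is correct and follows essentially the same route as the paper's own proof: you build the same program (the paper's $P$, obtained by pasting $Y$'s code into a simulator of $Y$ enumerating what it knows), verify it is an Intuitive Ordinal Notation with $|P|=|Y|$ via $Y$'s truthfulness, and argue $X$ knows this from $Y$'s code and truthfulness plus deductive closure. Your closing remark correctly identifies the step the paper also treats as the delicate one, deferring full rigor to the formalization in \cite{alexander2019measuring}.
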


\begin{proof}
    Suppose $Y$ were commanded to
    spend eternity enumerating the biggest Intuitive Ordinal Notations $Y$ could
    think of. This would result in some list $L$ of Intuitive Ordinal Notations
    enumerated by $Y$. Since $Y$ is an AGI, $L$ must be computable. Thus, there
    is some computer program
    $P$ whose outputs are exactly $L$.
    Since $X$ knows $Y$'s code,
    and as an AGI, $X$ is capable of reasoning about code,
    it follows that $X$ can infer a program $P$ that\footnote{For example,
    $X$ could write a general program $Sim(c)$ that simulates an input AGI $c$
    waking up in an empty room and being commanded
    to spend eternity enumerating Intuitive Ordinal Notations. This program $Sim(c)$
    would then output whatever outputs AGI $c$ outputs under those
    circumstances. Having written $Sim(c)$, $X$ could then obtain $P$ by
    pasting $Y$'s code into $Sim$ (a string operation---not actually running $Sim$
    on $Y$'s code).
    Nowhere in this process do we require $X$ to actually
    execute $Sim$ (which might be computationally infeasible).} lists $L$.
    Having constructed $P$ this way, $X$ knows: ``$P$ outputs
    $L$, the list of things $Y$ would output if $Y$ were commanded to spend eternity
    trying to enumerate large Intuitive Ordinal Notations''.
    Since $X$ knows $Y$ is truthful,
    $X$ knows that $L$ contains nothing except Intuitive Ordinal Notations,
    thus $X$ knows that $P$'s outputs are Intuitive Ordinal Notations,
    and so $X$ knows that $P$ is an Intuitive Ordinal Notation.
    So $|X|>|P|$. But $|P|$ is
    the least ordinal $>|Q|$ for all $Q$ output by $L$, in other words,
    $|P|=|Y|$.
    \qed
\end{proof}

Theorem \ref{maintheorem} is mainly intended for the situation where parent $X$ creates
independent child $Y$, but can also be applied in case $X$ self-modifies,
viewing the original $X$ as being replaced by the new self-modified
$Y$ (assuming $X$ has prior
knowledge of the code and truthfulness of the modified result).

It would be straightforward to extend Theorem \ref{maintheorem} to cases where $X$ creates
$Y$ non-deterministically. Suppose $X$ creates $Y$ using random numbers, such that $X$ knows
$Y$ is
one of $Y_1,Y_2,\ldots,Y_k$ but $X$ does not know which. If $X$ knows that $Y$ is truthful,
then $X$ must know that each $Y_i$ is truthful (otherwise, if some $Y_i$ were not truthful,
$X$ could not rule out that $Y$ was that non-truthful $Y_i$). So by Theorem \ref{maintheorem},
each $|Y_i|$ would be $<|X|$. Since $Y$ is one of the $Y_i$, we would still have
$|Y|<|X|$.

\section{The Knight-Darwin Law}
\label{knightdarwinagisection}

\begin{quote}
``...it is a general law of nature that no organic being self-fertilises itself
for a perpetuity of generations; but that a cross with another individual
is occasionally---perhaps at long intervals of time---indispensable.''
(Charles Darwin)
\end{quote}

In his Origin of Species, Darwin devotes many
pages to the above-quoted principle, later called the
Knight-Darwin Law \cite{darwin1898knight}. In \cite{alexander2013}
we translate
the Knight-Darwin Law into mathematical language.

\begin{myprinciple}
(The Knight-Darwin Law)
There cannot be an infinite sequence
$x_1,x_2,\ldots$ of organisms such that each $x_i$
is the lone biological parent of $x_{i+1}$.
If each $x_i$ is a parent of $x_{i+1}$, then some $x_{i+1}$
must have multiple parents.
\end{myprinciple}

A key fact about the ordinals is they
are \emph{well-founded}:
there is
no infinite sequence $o_1,o_2,\ldots$ of ordinals such that\footnote{This
is essentially true by definition,
unfortunately the formal definition of ordinal numbers is outside the scope of
this paper.} each
$o_i>o_{i+1}$. In Theorem \ref{maintheorem} we showed that if truthful
AGI $X$ creates truthful AGI $Y$ in such a way as to know the truthfulness
and code of $Y$, then $X$ has a higher Intuitive Ordinal Intelligence
than $Y$. Combining this with the well-foundedness of the ordinals yields
a theorem extremely similar to the Knight-Darwin Law.

\begin{mytheorem}
\label{maintheorem2}
(The Knight-Darwin Law for AGIs)
There cannot be an infinite sequence
$X_1,X_2,\ldots$ of truthful AGIs such that each $X_i$
creates $X_{i+1}$ in such a way as to know $X_{i+1}$'s truthfulness and code.
If each $X_i$ creates $X_{i+1}$
so as to know $X_{i+1}$ is truthful,
then
occasionally certain $X_{i+1}$'s must be
co-created by multiple creators (assuming that creation by
a lone creator implies the lone creator would know $X_{i+1}$'s code).
\end{mytheorem}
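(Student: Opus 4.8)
The plan is to obtain both halves of the statement as short consequences of Theorem~\ref{maintheorem} together with the well-foundedness of the ordinals; no genuinely new argument is required.

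First I would dispatch the first assertion by contradiction. Suppose $X_1,X_2,\ldots$ is an infinite sequence of truthful AGIs in which each $X_i$ creates $X_{i+1}$ in such a way as to know both $X_{i+1}$'s code and $X_{i+1}$'s truthfulness. Then the hypotheses of Theorem~\ref{maintheorem} are met by every consecutive pair $(X_i,X_{i+1})$, so $|X_i|>|X_{i+1}|$ for all $i$, yielding an infinite strictly descending chain $|X_1|>|X_2|>\cdots$ of ordinals. This contradicts well-foundedness, so no such sequence can exist.

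For the second assertion I would argue by contradiction again, reducing to the first. Suppose $X_1,X_2,\ldots$ is an infinite sequence of truthful AGIs with each $X_i$ creating $X_{i+1}$ so as to know $X_{i+1}$ is truthful, but suppose only finitely many of the $X_{i+1}$ are co-created by multiple creators. Choose an index $N$ larger than every such multi-creator index. Then for each $i\ge N$ the AGI $X_{i+1}$ has a single creator; since $X_i$ is by hypothesis a creator of $X_{i+1}$, that sole creator must be $X_i$ itself. Invoking the parenthetical hypothesis---that a lone creator would know the code of what it creates---we get that $X_i$ knows $X_{i+1}$'s code, and $X_i$ knows $X_{i+1}$'s truthfulness by assumption. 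Hence the tail $X_N,X_{N+1},\ldots$ is exactly an infinite sequence of the sort excluded by the first assertion, a contradiction. So infinitely often---``occasionally''---some $X_{i+1}$ must have multiple creators.

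I do not anticipate a substantive mathematical obstacle: the proof is a two-step reduction. The points that need care are interpretive. One must read the conclusion of the second assertion as the statement that the set of indices $i$ for which $X_{i+1}$ has multiple creators is infinite---equivalently, that every tail of the sequence contains a multi-creator child---so that negating it produces a tail with no multi-creator children. And one must notice that the parenthetical assumption is precisely the hinge converting ``lone creator'' into ``knows the code,'' which is what licenses the appeal to Theorem~\ref{maintheorem} on the tail; without it, a lone creator ignorant of its child's code would slip through the argument. It is also worth stating explicitly that nothing in Theorem~\ref{maintheorem} or in the first assertion depends on where the sequence begins, so passing to the tail $X_N,X_{N+1},\ldots$ is legitimate.
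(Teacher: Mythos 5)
Your proof is correct and follows essentially the same route as the paper: apply Theorem~\ref{maintheorem} to each consecutive pair to obtain an infinite strictly descending sequence of ordinals, contradicting well-foundedness. The paper's own proof is just that one sentence and leaves the second assertion (the reduction of ``only finitely many multi-parent children'' to the first assertion via a tail) implicit, so your explicit treatment of that step is a welcome but not divergent elaboration.
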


\begin{proof}
By Theorem \ref{maintheorem}, the Intuitive Ordinal Intelligence of $X_1,X_2,\ldots$
would be an infinite strictly-descending sequence of ordinals, violating
the well-foundedness of the ordinals.
\qed
\end{proof}

It is perfectly consistent with
Theorem \ref{maintheorem} that $Y$ might operate faster
than $X$, performing better in realtime environments (as in \cite{gavane}).
It may even be that $Y$ performs so much faster that it would be infeasible for
$X$ to use the knowledge of $Y$'s code to simulate $Y$.
Theorems \ref{maintheorem} and \ref{maintheorem2} are profound because
they suggest that descendants might initially appear more practical
(faster, better at problem-solving, etc.),
yet, without outside help, their knowledge must degenerate.
This parallels the \emph{hydra game} of
Kirby and Paris \cite{kirby1982accessible}, where a hydra
seems to grow as the player cuts off its heads, yet
inevitably dies if the player keeps cutting.

If AGI $Y$ has distinct parents $X$ and $X'$, neither of which fully knows
$Y$'s code, then Theorem \ref{maintheorem} does not apply to $X,Y$ or $X',Y$ and
does not force
$|Y|<|X|$ or $|Y|<|X'|$. This does not necessarily mean that
$|Y|$ can be arbitrarily large, though. If $X$ and $X'$ were themselves
created single-handedly by a lone parent $X_0$, similar
reasoning to Theorem \ref{maintheorem} would force $|Y|<|X_0|$ (assuming $X_0$
could infer the code and truthfulness of $Y$ from those of $X$ and $X'$)\footnote{This
suggests possible generalizations of the Knight-Darwin Law such as ``There cannot be
an infinite sequence $x_1,x_2,\ldots$ of biological organisms such that each $x_i$
is the lone grandparent of $x_{i+1}$,'' and AGI versions of same. This also raises
questions about the relationship between the set of AGIs initially created by
humans and how intelligent the offspring of those initial AGIs can be. These questions
go beyond the scope of this paper but perhaps they could be a fruitful area for future
research.}.

In the remainder of this section, we will non-rigorously speculate about three implications
Theorem \ref{maintheorem2} might have for AGIs and for AGI research.

\subsection{Motivation for Multi-agent Approaches to AGI}

If AGI ought to be capable of programming AGI,
Theorem \ref{maintheorem2} suggests
that a fundamental aspect of AGI should be the ability to collaborate with other
AGIs in the creation of new AGIs.
This seems to suggest there should be no
such thing as a \emph{solipsistic} AGI\footnote{That is, an AGI which believes itself
to be the only entity in the universe.}, or at least, solipsistic AGIs would be
limited in their reproduction ability.
For, if an AGI were solipsistic, it
seems like it would be difficult for this AGI to collaborate with other AGIs
to create child AGIs.
To quote Hern{\'a}ndez-Orallo et al: ``The appearance of multi-agent systems is a sign that
the future of machine intelligence will not be found in monolithic systems
solving tasks without other agents to compete or collaborate with''
\cite{hernandez2011more}.

More practically,
Theorem \ref{maintheorem2} might suggest
prioritizing research on multi-agent approaches to AGI, such as
\cite{castelfranchi1998modelling}, \cite{hernandez2011more},
\cite{hibbard2011societies},
\cite{kolonin2018reputation},
\cite{potyka2016group},
\cite{thorisson2004constructionist},
and similar work.

\subsection{Motivation for AGI Variety}

Darwin used the Knight-Darwin Law as a foundation for
a broader thesis that the survival of a
species depends on the inter-breeding of many members.
By analogy, if our goal is to create robust AGIs, perhaps
we should focus on creating a wide variety of AGIs, so that
those AGIs can co-create more AGIs.

On the other hand, if we want to reduce the danger of AGI getting out of control,
perhaps we should \emph{limit} AGI variety. At the extreme end
of the spectrum, if humankind were to limit itself to only creating one single
AGI\footnote{Or to perfectly isolate
different AGIs away from
one another---see \cite{yampolskiy2012leakproofing}.}, then
Theorem \ref{maintheorem2} would constrain the extent to which
that AGI could reproduce.

\subsection{AGI Genetics}

If AGI collaboration
is a fundamental requirement for AGI ``populations'' to propagate, it might
someday be possible to view AGI through a genetic lens. For example, if AGIs $X$ and $X'$
co-create child $Y$,
if $X$ runs operating
system $O$, and $X'$ runs operating system $O'$, perhaps $Y$
will somehow exhibit traces of both $O$ and $O'$.

\section{Discussion}
\label{objectionsection}

In this section, we discuss some anticipated objections.

\subsection{What does Definition \ref{maindefinition} really have to do with intelligence?}

We do not claim that Definition \ref{maindefinition} is the ``one true measure'' of
intelligence. Maybe there is no such thing: maybe intelligence is inherently
multi-dimensional. Definition \ref{maindefinition} measures a type of
intelligence based on mathematical knowledge\footnote{Wang has
correctly pointed out \cite{wang2007} that an AGI consists of much more than merely
a knowledge-set of mathematical facts. Still, we feel mathematical knowledge is at least
one important aspect of an AGI's intelligence.} closed under logical deduction. An
AGI could be good at problem-solving
but poor at ordinals. But the broad AGIs we are talking about in this paper
should be capable (if properly
instructed) of attempting any reasonable well-defined task, including that of
notating ordinals. So Definition \ref{maindefinition} does
measure one aspect of an AGI's abilities. Perhaps
a word like
``mathematical-knowledge-level'' would fit better: but
that would not change
the Knight-Darwin Law implications.

Intelligence has core components like pattern-matching,
creativity, and the ability to generalize.
We claim that these components are needed if one wants to
competitively name large ordinals. If $p$ is an Intuitive Ordinal Notation
obtained using certain facts and techniques, then \emph{any} AGI who used those
facts and techniques to construct $p$ should also be able to iterate those same
facts and techniques.
Thus, to advance from
$p$ to a larger ordinal which not just \emph{any} $p$-knowing
AGI could obtain, must require
the creative invention of some new facts or techniques, and
this invention requires some amount of creativity,
pattern-matching, etc. This becomes clear if the reader tries to
notate ordinals qualitatively larger than Example \ref{omegaexample};
see the more extensive examples in \cite{github}.

For analogy's sake, imagine a ladder which different AGIs
can climb, and suppose advancing up the ladder requires exercising
intelligence. One way to measure (or at least estimate) intelligence would be
to measure how high an AGI can climb said ladder.

Not all ladders are equally good. A ladder would be particularly poor if it had
a top rung which many AGIs could reach: for then it would fail to
distinguish between AGIs who could reach that top rung, even if one
AGI reaches it with ease and another with difficulty.
Even if the ladder was infinite and had no top rung, it would still be suboptimal
if there were AGIs capable of scaling the whole
ladder (i.e., of ascending however high they like, on demand)\footnote{Hibbard's
intelligence measure
\cite{hibbard2011measuring} is an infinite ladder
which is nevertheless short enough that many AGIs can
scale the whole ladder---the AGIs which
do not ``have finite intelligence'' in Hibbard's words
(see Hibbard's Proposition 3). It should be possible to
use a \emph{fast-growing hierarchy}
\cite{weiermann2002slow}
to transfinitely extend Hibbard's ladder and reduce
the set of whole-ladder-scalers. This would make
Hibbard's measurement ordinal-valued
(perhaps Hibbard intuited this; his abstract uses the
word ``ordinal''
in its everyday sense as synonym for ``natural number'').}.
A good ladder should have, for each particular AGI, a rung which that
AGI cannot reach.

Definition \ref{maindefinition} offers a good ladder.
The rungs which an AGI
manages to reach, we have argued, require core components of intelligence
to reach.
And no particular AGI can scale
the whole ladder\footnote{Thus, this ladder avoids a common problem that arises when
trying to measure machine intelligence using IQ tests, namely, that for any IQ test,
an algorithm can be designed to dominate that
test, despite being otherwise unintelligent \cite{besold2015can}.},
because no AGI can enumerate all the Intuitive Ordinal Notations: it can
be shown
that they are not computably enumerable\footnote{Namely, because if the
set of Intuitive Ordinal Notations were computably enumerable, the program $p$ which
enumerates them would itself be an Intuitive Ordinal Notation, which would
force $|p|>|p|$.}.

\subsection{Can't an AGI just print a copy of itself?}


If a truthful AGI knows its own code,
then it can certainly print a copy of itself.
But if so, then it necessarily cannot know the truthfulness of that
copy, lest it would know the truthfulness of itself.
Versions of G\"odel's incompleteness theorems adapted \cite{reinhardt1985absolute} to
mechanical knowing agents imply that a suitably idealized truthful AGI cannot know
its own code
and its own truthfulness.

\subsection{Prohibitively expensive simulation}

The reader might object that Theorem \ref{maintheorem} breaks down if $Y$ is prohibitively
expensive for $X$ to simulate. But Theorem \ref{maintheorem} and its
proof have nothing to do with simulation. In functional languages like
Haskell, functions can be manipulated, filtered,
formally composed with other functions, and so on, without needing
to be executed.
Likewise, if $X$ knows the code
of $Y$, then $X$ can manipulate and reason about that code without executing a single line
of it.



\section{Conclusion}
\label{conclusionsection}

The Intuitive Ordinal Intelligence of a truthful AGI is defined to be the supremum of the
ordinals which have Intuitive Ordinal Notations the AGI knows to be Intuitive Ordinal
Notations. We argued that this notion measures (a type of) intelligence.
We proved that if a truthful AGI single-handedly creates
a child truthful AGI, in such a way as to know the child's truthfulness and code,
then the parent must have greater Intuitive Ordinal Intelligent than the child. This
allowed us to establish a structural property for AGI populations,
resembling the Knight-Darwin
Law from biology. We speculated about implications of this biology-AGI parallel.
We hope by better understanding
how AGIs create new AGIs, we can better
understand methods of AGI-creation by humans.

\section*{Acknowledgments}

We gratefully acknowledge Jordi Bieger, Thomas Forster, Jos{\'e} Hern{\'a}ndez-Orallo,
Bill Hibbard, Mike Steel,
Albert Visser, and the reviewers for discussion and feedback.

\bibliographystyle{splncs04}
\bibliography{agikd}

\end{document}